\title{\Huge$\,$\\[-2.75ex]
{List-decoding of Subspace Codes and Rank-Metric Codes up to Singleton Bound}\\[0.50ex]}
\author{%
\authorblockN{\large{\bf%
Hessam Mahdavifar and Alexander Vardy}\vspace{0.25ex}}\\
\authorblockA{\large
%Department of Electrical and Computer Engineering,
University of California San Diego, La Jolla, CA\,92093, USA\\ 
{\tt \{hessam@ucsd.edu,\,avardy@ucsd.edu\}}}\vspace*{-1.5ex}
}
\theoremstyle{plain} 
\newtheorem{thm}{Theorem\hspace{-1pt}} 
\newenvironment{theorem}
{\begin{thm}\hspace*{-1ex}{\bf.}}{\end{thm}}
\newtheorem{lem}[thm]{Lemma\hspace{-.75pt}}
\newenvironment{lemma}{\begin{lem}\hspace*{-1ex}{\bf.}}{\end{lem}}
\newtheorem{prop}[thm]{Proposition$\!$}
\newtheorem{cor}[thm]{Corollary$\!$}
\newenvironment{corollary}{\begin{cor}\hspace*{-1ex}{\bf.}}{\end{cor}}
\newtheorem{defn}{Definition$\!$}
\newcounter{enumrom}
\renewcommand{\theenumrom}{(\roman{enumrom})}
\renewcommand{\@endtheorem}{\endtrivlist}
\renewcommand{\thefigure}{{\bf \@arabic\c@figure}}
\renewcommand{\fnum@figure}{{\bf Figure}\,\thefigure}
\newcommand{\be}[1]{\begin{equation}\label{#1}}
\newcommand{\ee}{\end{equation}} 
\newcommand{\eq}[1]{(\ref{#1})}
\renewcommand{\leq}{\leqslant}
\renewcommand{\geq}{\geqslant}
\newcommand{\Tref}[1]{Theo\-rem\,\ref{#1}}
\newcommand{\Lref}[1]{Lem\-ma\,\ref{#1}}
\newcommand{\Cref}[1]{Co\-ro\-lla\-ry\,\ref{#1}}
\newcommand{\deff}{\mbox{$\stackrel{\rm def}{=}$}}
\newcommand{\al}{\alpha}
\newcommand{\gam}{\gamma}
\newcommand{\eps}{\epsilon}
\newcommand{\cC}{{\cal C}}
\newcommand{\cE}{{\cal E}}
\newcommand{\cG}{{\cal G}}
\newcommand{\cP}{{\cal P}}
\newcommand{\cQ}{{\cal Q}}
\DeclareMathAlphabet{\mathbfsl}{OT1}{ppl}{b}{it}
\newcommand{\bU}{\mathbfsl{U}} 
\newcommand{\bV}{\mathbfsl{V}}
\newcommand{\bX}{\mathbfsl{X}}
\newcommand{\bY}{\mathbfsl{Y}}
\newcommand{\bu}{\mathbfsl{u}}
\renewcommand{\Bbb}{\mathbb}
\newcommand{\F}{\Bbb{F}}
\newcommand{\Fq}{\Bbb{F}_{\!q}}
\newcommand{\Fqm}{\Bbb{F}_{q^m}}
\newcommand{\script}[1]{{\mathscr #1}}
\newcommand{\nsfrac}[2]{\mbox{\raisebox{.8mm}{\normalsize $\scriptstyle #1$} 
                          \normalsize$\!\! \kern-1pt / \kern-1pt \!\!$
                          \raisebox{-.8mm}{\normalsize $\scriptstyle #2$}}}
\newcommand{\lsfrac}[2]{\mbox{\raisebox{.8mm}{\large $\scriptstyle #1$} 
                          \large$\!\! \kern-1pt / \kern-1pt \!\!$
                          \raisebox{-.8mm}{\large $\scriptstyle #2$}}}
\renewcommand{\L}{\script{L}}
\newcommand{\fbu}{f_{\bu}}
\newcommand{\Rs}{R^{*}}
\begin{document}

\maketitle
\thispagestyle{empty}

\begin{abstract}
Subspace codes and rank-metric codes can be used to correct errors and erasures in network, with linear network coding. Both types of codes have been extensively studied in the past five years.

Subspace codes were introduced by Koetter and Kschischang to correct errors and erasures in networks where topology is unknown (the noncoherent case). In this model, the codewords are vector subspaces of a fixed ambient space; thus codes for this model are collections of such subspaces. In a previous work, we have developed a family of subspace codes, based upon the Koetter-Kschichang construction, which are efficiently list decodable. Using these codes, we achieved a better decoding radius than Koetter-Kschischiang codes at low rates. Herein, we introduce a new family of subspace codes based upon a different approach which leads to a linear-algebraic list-decoding algorithm. The resulting error correction radius can be expressed as follows: for any integer $s$, our list-decoder using $s+1$-interpolation polynomials guarantees successful recovery of the message subspace provided the normalized dimension of errors is at most $s(1-sR)$. The same list-decoding algorithm can be used to correct erasures as well as errors. The size of output list is at most $\cQ^{s-1}$, where $\cQ$ is the size of the field that message symbols are chosen from. 

Rank-metric codes are suitable for error correction in the case where the network topology and the underlying network code are known (the coherent case). Gabidulin codes are a well-known class of algebraic rank-metric codes that meet the Singleton bound on the minimum rank metric of a code. In this paper, we introduce a folded version of Gabidulin codes analogous to the folded Reed-Solomon codes of Guruswami and Rudra along with a list-decoding algorithm for such codes. Our list-decoding algorithm makes it possible to recover the message provided that the normalized rank of error is at most $1-R-\eps$, for any $\eps > 0$. Notably this achieves the information theoretic bound on the decoding radius of a rank-metric code. 
\end{abstract}

\begin{keywords} 
list-decoding, subspace codes, rank-metric codes, linearized polynomial, Gabidulin codes
\vspace{-1.00ex}
\end{keywords}

%=======================================================================%
%                                                                       %
%    1. INTRODUCTION                                                    %
%                                                                       %
%=======================================================================%
\section{Introduction} 
\label{sec:Introduction}
%\vspace{-.25ex}
\noindent\looseness=-1
%\PARstart{T}{he} 

Subspace codes and rank metric codes are two closely related family of codes used for reliable communication of messages in linear network coding \cite{KK} \cite{SKK}. Network coding, in general, is a technique where nodes of the network take several packets and combine them together for transmission instead of simply relaying the packets of information they receive \cite{ACLY}. It is known that in multicast setting, where one transmitter communicates with several receivers in the network simultaneously, linear network coding, wherein all the operations performed at network nodes are linear operations, is sufficient to achieve the individual max-flow bound on the rate of communication between the transmitter and each of the receivers \cite{LYC}. 

In random linear network coding, each intermediate node of the network creates a random linear combination of the packets it receives and sends it through its output links \cite{HKMKE}. It is proved that random linear network coding is as good as linear network coding, in the context of multicast setting, with high probability if the size of the field the message symbols are chosen from is large enough \cite{HMKKESL}. Subspace codes have been recently introduced in order to enable reliable communication of messages in random linear network coding \cite{KK}. Let the ambient space $W$ be a vector space over a finite field $\Fq$. A subspace code in $W$ is a non-empty subset of all the subspaces of $W$. Koetter-Kschischang algebraic construction of subspace codes, originally called Reed-Solomon-like codes in \cite{KK},  is analogous to Reed-Solomon codes in classical block codes wherein symbols are replaced by vectors, regular polynomials with \emph{linearized polynomials}, and sequences of symbols with $\Fq$-linear span of the corresponding vectors.

In a previous work, we proposed a new family of subspace codes that enables list-decoding, hence achieving a better tradeoff between rate and error correction capability \cite{MV1}, \cite{MV2}. The idea was to evaluate all the powers of the linearized message polynomial, up to some power $L$, in order to list-decode with list size at most $L$.  In a sense our algorithm can be regarded as an analogous to Sudan list-decoding algorithm of Reed-Solomon codes \cite{S}. We further improved this result by introducing multiplicity in the ring of linearized polynomials \cite{MV3}. This was motivated by Guruswami-Sudan list-decoding algorithm of Reed-Solomon codes \cite{GS}. We were able to list-decode for a wider range of rates and also to achieve a better tradeoff between the rate and decoding radius by enforcing multiple roots for the interpolation polynomial \cite{MV3}.

In this paper, we introduce a new family of subspace codes that allows a simple linear-algebraic list-decoding by using $s+1$-variate interpolation polynomials, where $s$ is a design parameter. The entire list-decoding algorithm is linear-algebraic. A system of linear equations is solved for the interpolation step and another linear system is solved to compute the set of all the possible solutions which indeed is a linear space. This is motivated by the recent work of Vadhan \cite[Ch. 5]{V} and Guruswami \cite{G2} which suggested a simplified version, with no need of multiplicity, of previously proposed list-decoding algorithm of folded Reed-Solomon codes by  Guruswami and Rudra in \cite{GR}. The later was built upon the work of Parvaresh and Vardy on list-decoding of Reed-Solomon codes by proposing multivariate interpolation \cite{PV}.  

In the coherent system of network coding, the network topology and the particular network coding operations done at intermediate nodes are known to both the transmitter and the receiver. In this setting, rank-metric codes turned out to be the suitable tool to deal with possible injected errors into the network \cite{SKK}. Each codeword in a rank-metric code is a matrix with fixed dimensions whose entries are taken from a finite field $\Fq$. The distance between two matrices is simply the rank of their difference. Gabidulin codes were introduced as a class of MRD (maximum rank distance) codes. They achieve the Singleton bound on the minimum rank distance of a rank metric code. In this paper, we define a folded version of Gabidulin codes. Then we propose a list-decoding algorithm that can correct the fraction of errors up to the Singleton bound which is the information theoretic upper bound on the error correction capability of a code.

The rest of this paper is organized as follows. We start with a brief overview of linearized polynomials, subspace codes, rank-metric codes and Gabidulin codes in Section \ref{sec:two}. In Section \ref{sec:three}, we first discuss our new construction of subspace codes, then we propose a list-decoding algorithm. After that, we establish the correctness of the algorithm and provide the decoding radius and other parameters of our code. In Section \ref{sec:four}, we introduce the folded version of Gabidulin codes and provide the list-decoding algorithm. Then we show that we are able to correct the fraction of errors up to $1-R$, $R$ being the rate of the code, hence achieving the Singleton upper bound on the error correction capability of rank-metric codes.
 
\vspace{1ex}
%=======================================================================%
%                                                                       %
%    2. Background and Prior Work                 %              
%                                                                       %
%=======================================================================%
\section{Background and Prior Work}
\vspace{.25ex}
\label{sec:two}

In this section, we first explain the ring of linearized polynomials. We establish some relevant terminology for subspace codes and explain the Koetter-Kschischang construction of subspace codes. Then we briefly recap the results of \cite{MV1} and \cite{MV2} which provide a new construction of subspace codes. In \cite{MV1} and \cite{MV2} we suitably modified and extended Koetter-Kschischang construction in many important respects in order to enable list-decoding. Then we briefly discuss the results of the follow-up work in \cite{MV3} wherein we introduced multiple roots for the interpolation polynomial. At the end, we briefly review rank-metric codes and Gabidulin codes as a class of maximum rank distance codes.

A polynomial over some extension field $\Fqm$ of $\Fq$ is called linearized if it has the following form:
$$
f(X)=\sum^{s}_{i=0} a_i X^{q^i}, 
$$  
where $a_i \in \Fqm$, for $i=0,1,\dots,s$. Assuming that $a_s \neq 0 $ we say that the polynomial $f(X)$ has $q$-degree $s$ which means that its actual degree is $q^s$. When $q$ is fixed under discussion, we will let $X^{\left[i\right]}$ denote $X^{q^i}$. The main property of linearized polynomials from which they receive their name is that they act as linear maps with respect to $\Fq$. The set of linearized polynomials forms a non-commutative ring under addition $+$ and composition operation $\otimes$. For any two linearized polynomials $f_1(X)$ and $f_2(X)$, the composition operation $f_1(X) \otimes f_2(X)$ is defined to be the composition $f_1(f_2(X))$ which is always a linearized polynomial. The ring of linearized polynomials over $\Fqm$ is denoted by $\L_{q^m}[X]$.

\subsection{Subspace Codes}

Let $W$ be a fixed $N$-dimensional vector space over $\mathbb{F}_q$ and $\cG(W)$ denote the set of all subspaces of $W$. For any $V \in \cG(W)$, the dimension of $V$ is denoted by $\dim(V)$.  For any $A, B \in \cG(W)$, the distance between $A$ and $B$ is defined as follows:
$$
d(A,B) \hspace*{6pt} \deff\ \hspace*{6pt} \dim(A+B)-\dim(A\cap B)
$$
$\cG(W)$ is indeed a metric space under this metric. Let $\cG(W,n)$ denote the set of all $n$-dimensional subspaces of $W$. A code $\mathcal{C}$ in the ambient space $W$ is a non-empty subset of $\cG(W)$. A codeword is an element of $\mathcal{C}$ which is in fact a subspace of $W$. 
\\\textbf{Definition.\,}\cite{KK} Let $\mathcal{C}$ be a code associated with the ambient space $W$ of dimension $N$ over $\mathbb{F}_q$. Suppose that the dimension of any $V \in \mathcal{C}$ is at most $n$. Then the rate of the code $R$ is defined as follows.
\be{symbol_rate}
R\hspace{6pt} {\deff} \hspace{6pt} \frac{\log_q\left|\mathcal{C}\right|}{nN}
\ee
In \cite{MV3}, we defined a new parameter, called the \emph{packet rate} of the code. The packet rate $\Rs$ is defined as follows:
\be{packet_rate}
\Rs \hspace{6pt} {\deff} \hspace{6pt} \frac{\log_{q^m}\left|\mathcal{C}\right|}{n} = 
\frac{\log_q\left|\mathcal{C}\right|}{nm}
\ee
where $q^m$ is the size of the underlying extension field. 
\hfill\raisebox{-0.5ex}{$\Box$}\vspace{1.0ex}

Koetter-Kschischang construction \cite{KK} of subspace codes can be regarded as an analogous to Reed-Solomon codes wherein symbols are replaced by vectors, polynomials with linearized polynomials and sequences of symbols with $\Fq$-linear span of the corresponding vectors.  Fix $m$ and an extension field $\Fqm$ of $\Fq$. $\Fqm$ can be also regarded as a vector space of dimension $m$ over $\Fq$. Fix a set $A=\left\{\alpha_1,\dots,\alpha_n\right\}$ of $n$ linearly independent vectors in $\Fqm$. Let $\bu=(u_0,\dots,u_{k-1})$ be the message vector and $\fbu(X)=\sum^{k-1}_{i=0} u_iX^{\left[i\right]}$ be the corresponding linearized message polynomial. Then the corresponding codeword $V$ is the $\Fq$-linear span of the set $\left\{(\alpha_i,f(\alpha_i)):1 \leq i \leq n\right\}$ which is an $n$-dimensional vector space. The ambient space $W$ is equal to $\left\langle A\right\rangle \oplus \Fq^m$ which is an $(n+m)$-dimensional vector space over $\Fq$. The codewords of Koetter-Kschischang code, which we simply call KK code, are in fact $n$ dimensional subspaces of the ambient space $W$. Each element of $W$ is represented as a vector $(x,y)$ where $x$ belongs to the span of $\al_i$'s and $y$ is an element of $\Fqm$.  

At the decoder, a nonzero bivariate linearized polynomial $Q(X,Y)$ of the form
$$
Q(X,Y)=Q_0(X)+Q_1(Y),\ 
$$
is constructed, where $Q_0$ and $Q_1$ are subject to some degree constraints such that $Q(x_i,y_i)=0$ for all the basis elements of the received subspace. Then the equation $Q(X,f(X)) = 0$ is solved to recover the message polynomial. It is proved in \cite{KK} that if not too many errors and erasures happen, then $\fbu(X)$ is the unique solution to this equation. Koetter and Kschischang give the normalized decoding radius of this scheme as

\be{tau-KK}
\frac{n-k+1}{n} \, =\, 1-\frac{k-1}{n} \, \approx \, 1-\Bigl(1+\frac{n}{m}\Bigr)R = 1 - \Rs
\ee

The main obstacle in the list-decoding of KK codes is that the ring of linearized polynomials is non-commutative. Because of that, an equation of certain degree over the ring of linearized polynomials may have exponentially many roots, while one has to guarantee a bounded list-size at the output of the decoder.  In order to enable list-decoding, we modified the KK construction in many important ways. Our work in \cite{MV1} and \cite{MV2} basically leads to a new construction of subspace codes which is list-decodable. 

Next, we turn to describe the encoding and decoding of this construction of subspace codes \cite{MV1}, \cite{MV2}. Recall from \cite[Ch. 4.9]{MS} that any finite extension $\mathbb{F}_{q^l}$ of $\mathbb{F}_q$ contains a primitive element $\gam$ such that $\gam,\gam^q,\dots,\gam^{q^{l-1}}$ forms a basis for $\F_{q^l}$ as a vector space over $\mathbb{F}_q$. This is called a normal basis for $\mathbb{F}_{q^l}$. Fix a finite field $\Fq$ and let $n$ divides $q-1$. Then the equation $x^n-1=0$ has $n$ distinct solutions in $\Fq$. Let $e_1=1,e_2,e_3,\dots,e_n$ be these solutions.  Let $\F=GF(q^{nm})$ and $\gamma$ be a generator of a normal basis for $\F$. Then define
\be{define-alpha}
\alpha_i=\gamma+e_i^{-1}\gamma^{q^m}+e_i^{-2}\gamma^{q^{2m}}+\dots+e_i^{-(n-1)}\gamma^{q^{(n-1)m}}
\ee
for $i=1,2,\dots, n$. For a given message polynomial $\fbu(X)$, our encoder constructs the vectors $v_i$'s as follows:
$$
v_i=(\alpha_i,\fbu(\alpha_i),\fbu^{\otimes 2}(\alpha_i),\dots, \fbu^{\otimes L}(\alpha_i))
$$
for $i=1,2,\dots,n$. Then it outputs the $n$-dimensional vector space spanned by $v_1,v_2,\dots,v_n$.  In this construction, the ambient space $W$ has dimension equal to $n+nmL$ and each element in $W$ is represented as a vector with $L+1$ coordinates such as $(x,y_1,y_2,\dots,y_L)$, where $x$ belongs to the vector space spanned by $\al_1,\al_2,\dots,\al_n$ and $y_i \in \F_{q^{nm}}$, for $i=1,2,\dots,L$. The decoding algorithm consists of three steps. In the first step, it computes the interpolation points. In the second step, a multivariate linearized polynomial $Q(X,Y_1,Y_2,\dots$ $,Y_L)$ of the form
$$
Q_0(X)+Q_1(Y_1)+Q_2(Y_2)+\dots+Q_L(Y_L)
$$
is constructed, where each $Q_i$ is subject to a degree constraint, such that  $Q(x,y_1,y_2,\dots,y_L)\, =\, 0$ for all the interpolation points $(x,y_1,y_2,\dots,y_L)$. Then in the factorization step, we compute all the roots $f(X) \in \L_q[X]$, with degree at most $k-1$, of the equation: 
$$
Q\bigl(X,f(X),\dots,f^{\otimes L}(X)\bigr) = 0
$$
To solve this equation efficiently, we propose a linearized version of Roth-Ruckenstein algorithm which was designed to solve equations over the ring of polynomials \cite{RR}. We also show in \cite{MV2} that there are at most $L$ solutions for $f(X) \in \L_q[X]$. Each solution corresponds to one possible output message.

We prove in \cite{MV2} that the normalized decoding radius of this list-decoding algorithm in terms of list size $L$ and packet rate $\Rs$ is given by
\be{MV-radius}
L \, - \, \frac{1}{2}L(L+1) \Rs
\ee 

We further improve this result by introducing multiplicities for the interpolation polynomial in \cite{MV3}. First, we establish the notion of multiplicity for linearized polynomials in this context. Then by enforcing multiple roots for the interpolation polynomial we achieve a better decoding radius. We are also able to list-decode at higher rates. For every positive integers $L$ and $c$, our list-$L$ decoder with multiplicity $c$ guarantees successful recovery of the message subspace provided that the normalized dimension of the error is at most
$$
\frac{2(L+1)}{c+1} \,-\, 1 \, - \,  \frac{L(L+1)}{c(c+1)}\Rs
$$
This improves the normalized decoding radius upon the previous results, given in \eq{tau-KK} and \eq{MV-radius}, for a wide range of rates. The parameter $c$ is independent of the code construction and can be chosen at the decoder in such a way that the decoding radius is maximized. As $L$ tends to infinity, the decoding radius of our construction with appropriate choice of $c$ approaches $\frac{1}{\Rs}-1$.

\subsection{Rank-Metric Codes}
\label{sec:two_B}

Rank-metric codes are suitable for the coherent system of network coding, where the network topology and the underlying network code are known to both the transmitter and the receiver \cite{SKK}. In \cite{SKK}, Silva et al. also show that subspace codes and rank-metric are closely related. Indeed, there is an injective mapping between rank-metric codes and subspace codes through a \emph{lifting} operation. 

Let $\Fq^{n \times m}$ denote the set of all $n \times m$ matrices over $\Fq$. For any $\bX \in \Fq ^{n \times m}$, let $\left\langle \bX \right\rangle$ denote the row space of the matrix $\bX$. A rank-metric code is just a subset of $\Fq ^{n \times m}$ which is called an array code in \cite{R}. The distance between $\bX , \bY \in \Fq ^{n \times m}$ is defined as rank$(\bX-\bY)$. We define the rate $R$ of a rank-metric code $\cC \subseteq \Fq ^{n \times m}$ as follows:
$$
R\hspace{6pt} {\deff} \hspace{6pt} \frac{\log_q(\left| \cC \right|)}{nm}
$$ 
The minimum (rank) distance of $\cC$ is the minimum distance between distinct elements of $\cC$. The Singleton bound is established in the context of rank-metric codes by Gabidulin in \cite{G}. It states that the minimum distance of a code $\cC$ with rate $R$, normalized by the number of rows $n$, is at most $1-R$. A rank-metric code that meets the Singleton bound on the minimum distance is called a maximum rank distance (MRD) code. Gabidulin codes are a class of MRD codes proposed in \cite{G}.

A Gabidulin code in $\Fq^{n \times m}$ is indeed a linear $(n, k)$ code over $\Fqm$ whose generator matrix $G$ has the following form:
$$
\left[
\begin{array}{cccc}
\al_1^{[0]} & \al_2^{[0]} & \dots & \al_n^{[0]} \\
\al_1 ^{[1]} & \al_2^{[1]} & \dots & \al_n^{[1]} \\
. & .  & .\,\,\, & . \\
. & .  & . & . \\
. & .  & \,\,\,. & . \\
\al_1^{[k-1]} & \al_2^{[k-1]} & \dots & \al_n ^{[k-1]}
\end{array}
\right]
$$
where the elements $\al_1, \al_2, \dots, \al_n \in \Fqm$ are linearly independent over $\Fq$. Each codeword is a column vector of length $n$ over $\Fqm$ which can be also regarded as a matrix in $\Fq^{n \times m}$. Note that the condition $n \leq m$ is required. The rate of the code is $R = \frac{k}{n}$. The minimum rank distance of a Gabidulin code is $d = n - k + 1$ which satisfies the Singleton bound in the rank metric \cite{G}. The minimum rank distance can be normalized to $1-R$. The unique decoding radius bound then becomes equal to $(1-R)/2$. A decoding algorithm which can correct errors, as long as the rank of error is less than $(d-1)/2$, is proposed in \cite{G}, hence achieving the bound $(1-R)/2$ on unique decoding radius.

Suppose that the input to the Gabidulin encoder is a message vector $\bu=\left[u_0\, u_1\,\dots\, u_{k-1}\right]$ which consists of $k$ message symbols in $\Fqm$. Let $\fbu(X)$ denote the corresponding linearized message polynomial $\sum^{k-1}_{i=0} u_iX^{\left[i\right]}$. Then the corresponding codeword $\bV = (\bu G)^T$ is indeed equal to 
$$
\bigl[\fbu(\al_1)\, \fbu(\al_2)\,\dots\, \fbu(\al_n)\bigr]^T
$$
which can be also regarded as a matrix in $\Fq ^{n \times m}$.

Now, the close relation between Koetter-Kschischang construction of subspace codes and Gabidulin codes becomes clear. Intuitively KK codes can be thought as a modification of Gabidulin codes where there is no ordering for the coordinates $\fbu(\al_i)$'s. Each $\al_i$ is appended to the corresponding $\fbu(\al_i)$, as a vector in $n$-dimensional vector space spanned by all $\al_i$'s, in order to keep track of evaluation points of the linearized polynomial $\fbu(X)$. More rigorously, the lifting mapping, defined in \cite{SKK}, translates Gabidulin codes into KK codes.  
\vspace{1ex}

%=======================================================================%
%                                                                       %
%   Subspace codes, new construction and linear algebraic list-decoding      %              
%                                                                       %
%=======================================================================%
\section{New Subspace Codes and Algebraic List-decoding Thereof}
\vspace{.25ex}
\label{sec:three}

In this section, we present a new construction of subspace codes and a list-decoding algorithm capable of correcting both errors and erasures. Our results in this section are motivated by the recent work of Vadhan \cite[Ch. p]{V} and Guruswami \cite{G2}. Then we establish the correctness of our algorithm and compute the error correction capability of the proposed construction. 

\subsection{Code Construction and List-decoding Algorithm}
\label{sec:three_A}

The following parameters of the construction are fixed: the finite field $\Fq$ and an extension $\Fqm$, the number of information symbols $k$, the dimension of code $n$ and  the parameter $s$ which is related to the list size. We require that $k \leq n \leq m$. A set $A=\left\{\al_1,\al_2,\dots,\al_n\right\}$ of linearly independent elements of $\Fqm$ is also fixed. In this construction, the ambient space $W$ is an $n+sm$-dimensional vector space over $\Fq$. Let $\gam$ be an element of $\Fqm$ which is not contained in any subfield of $\Fqm$ i.e. $\gam, \gam ^{q}, \dots, \gam^{q^{m-1}}$ are all distinct. 
\\\textbf{Encoding Algorithm:}
 \\Formally, the encoder is a function $\cE\! : \Fqm^k \! \to \cG(W,n)$. It accepts as input a message $\bu =(u_0,u_1,\dots,u_{k-1}) \in \Fqm^k$. The corresponding message polynomial is $\fbu (X) = \sum^{k-1}_{i=0} u_i X^{\left[i\right]}$. Then the corresponding codeword $V$ is the $\Fq$-linear span of the set $\left\{ \bigl(\al_i,f(\al_i),f(\gam \al_i), \dots, f(\gam^{s-1} \al_i) \bigr) : i \in [n]\right\}$.
 
Notice that, KK code is a special case of this for $s=1$. Since $\al_i$'s are linearly independent, each codeword is an $n$-dimensional vector space which is a subspace of
\be{ambient_space}
W=\left\langle \al_1,\al_2,\dots,\al_n \right\rangle \oplus \underbrace{\Fqm\oplus \dots \oplus \Fqm}_{s \text{ times}}
\ee
The dimension of $W$ is equal to $n+sm$, as mentioned before. Each vector in $W$ is represented as a vector with $s+1$ coordinates such as $(x,y_1,\dots,y_s)$, where $x$ is an element of the vector space spanned by $\al_1,\al_2,\dots,\al_n$ and all $y_i$'s belong to $\Fqm$.

Now, we turn to explain the list-decoding algorithm. Suppose that $V$ is transmitted and a subspace $U$ of $W$ of dimension $r$ is received. We need another parameter $d$ at the decoder which is computed as follows:
\be{define-d}
d =  \left\lceil \frac{r+s(k-1)+1}{s+1}\right\rceil
\ee
As we will see, $d$ is chosen in such a way that existence of the interpolation polynomial is guaranteed at the decoder.  
\\\textbf{List-decoding Algorithm:}
\\The decoder accepts as input a vector space $U$ which is a subspace of $W$. It then outputs a list of size at most $q^{m(s-1)}$ of vectors in $\Fqm^k$ in three steps:
\begin{enumerate}
\item \textit{Computing the interpolation points:} 
\\Find a basis $(x_i,y_{i,1},y_{i,2},\dots,y_{i,s}), i=1,2,\dots,r$, for $U$. This is the set of interpolation points.  
\item \textit{Interpolation:} Construct a nonzero multivariate linearized polynomial $Q(X,Y_1,Y_2,\dots,Y_s)$ of the form
$$
Q(X,Y)=Q_0(X)+Q_1(Y)+Q_2(Y_2)+\dots+Q_s(Y_s)
$$
where $Q_i$'s are linearized polynomials over $\Fqm$, $Q_0$ has $q$-degree at most $d-1$ and $Q_i$ has $q$-degree at most $d-k$, for $i=1,2,\dots,s$, subject to the constraint that
\be{interpolation}
Q(x_i,y_{i,1},y_{i,2},\dots,y_{i,s})=0\ \text{for}\ i=1,2,\dots,r 
\ee
\item \textit{Message recovery:} Find all polynomials $f(X) \in \L_{q^m}[X]$ of degree at most $k-1$ that satisfy the following equation
$$
Q\bigr(X,f(\gam X),f(\gam^2 X),\dots,f(\gam^{s-1}X)\bigl) = 0
$$
The decoder outputs coefficients of each solution $f(X)$ as a vector of length $k$. 
\end{enumerate}
The first step of this list-decoding algorithm can be done using elementary linear algebraic operations. The second step is basically solving a linear system of equations. There are several ways for doing that. The most straightforward way is the Gaussian elimination method. However, this method does not take advantage of the certain structure of this system of equations and therefore, it is not efficient. Efficient interpolation algorithms in the ring of linearized polynomials are presented in \cite{XYS}. In this case, the complexity of corresponding interpolation algorithm is given as $O(n^2 s^3)$ field operations over $\Fqm$. The parameter $s$ is in fact a design parameter and can be regarded as a constant. Indeed, the interpolation step is quadratic in terms of $n$. In the next subsection, we explain how the message recovery step can be done using a linear algebraic method. The complexity of the message recovery step is also quadratic. Hence, the total complexity of our algorithm is quadratic in terms of $n$, the dimension of the code.

\subsection{Recovering the Message Polynomial}
\label{sec:three_B}

As discussed in the foregoing section, in the last step of the list-decoding algorithm we need to find all polynomials $f(X) \in \L_{q^m}[X]$ of degree at most $k-1$ that satisfy
\be{factorization}
Q_0(X)+Q_1(f(X))+Q_2(f(\gam X))+\dots+Q_s(f(\gam^{s-1} X))=0
\ee
\textbf{Remark.\,} Suppose that $f, g \in \L_{q^m}[X]$ are two solutions to the equation \eq{factorization}. Since $Q_i$'s are linearized polynomials, for any $\al \in \Fq$, $\al f + (1-\al) g$ is also a solution to \eq{factorization}. Therefore, the set of solutions, which can be regarded as vectors of length $k$ over $\Fqm$, forms an affine subspace of $\Fqm^k$ as a vector space over $\Fq$. \hfill\raisebox{-0.5ex}{$\Box$}\vspace{1.0ex}

In the next lemma, we establish an upperbound on the number of solutions to \eq{factorization}. The proof of lemma also clarifies how the affine space of solutions can be computed with quadratic complexity.

\begin{lemma}
\label{list_bound}
The dimension of the affine space of solutions $f(X) \in \L_{q^m}[X]$, of degree at most $k-1$, to \eq{factorization} is at most $m(s-1)$.
\end{lemma}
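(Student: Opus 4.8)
The plan is to reduce the equation \eqref{factorization} to a system of $\F_q$-linear equations in the coefficients $f_0,f_1,\dots,f_{k-1}$ of $f(X)$ and to show that the homogeneous part of this system has corank at most $m(s-1)$; since the Remark already tells us the solution set is an affine subspace, this bound on the corank of the associated homogeneous system bounds the dimension of the affine space. The key observation is that when we substitute $f(X)=\sum_{j=0}^{k-1} f_j X^{[j]}$ into \eqref{factorization}, each term $Q_t\bigl(f(\gam^{t-1}X)\bigr)$ becomes a linearized polynomial in $X$ whose coefficients are $\F_{q^m}$-linear (in fact $\F_q$-linear, but $\F_{q^m}$-linear is what matters here) expressions in the $f_j$'s; collecting the coefficient of each power $X^{[\ell]}$ and setting it to zero yields one equation over $\F_{q^m}$ per value of $\ell$. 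The crucial point, which makes this work, is that the \emph{number of distinct powers $X^{[\ell]}$ that can appear} is small: $Q_0$ has $q$-degree at most $d-1$, and each $Q_t(f(\gam^{t-1}X))$ has $q$-degree at most $(d-k)+(k-1)=d-1$, so $\ell$ ranges over $0,1,\dots,d-1$, giving exactly $d$ equations over $\F_{q^m}$, i.e.\ $dm$ equations over $\F_q$, in the $km$ unknowns (the $f_j$ over $\F_q$). This by itself is not enough, so the real content is in extracting extra structure.

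The heart of the argument is to look at the top coefficients. Write $Q_t(Y)=\sum_{i} q_{t,i} Y^{[i]}$. Comparing coefficients of $X^{[\ell]}$ in \eqref{factorization} for $\ell = k, k+1, \dots, d-1$ isolates a subsystem involving only the ``high'' coefficients $f_{\ell-i}$ with $i \le d-k$, and — this is the point I would exploit — for these top $d-k$ equations the contribution of $Q_0$ vanishes (since $Q_0$ has $q$-degree $\le d-1 < \cdots$; more precisely one arranges the index bookkeeping so that the equations for $\ell \ge k$ express $f_{k-1}, f_{k-2}, \dots$ recursively). I expect that comparing the coefficient of $X^{[\ell]}$ for the largest relevant $\ell$ gives a relation of the form $\sum_{t=1}^{s} q_{t,d-k}\,(\gam^{t-1})^{[\,\cdot\,]} f_{k-1}^{[\,\cdot\,]} + (\text{lower } f_j) = 0$, and because $\gam$ generates $\F_{q^m}$ over $\F_q$ (so $\gam,\gam^q,\dots,\gam^{q^{m-1}}$ are distinct, equivalently $1,\gam,\dots,\gam^{s-1}$ behave ``generically'' — this is exactly why $\gam$ was chosen that way), the map sending $f_{k-1}$ to that leading term is injective as an $\F_q$-linear map, so $f_{k-1}$ is determined up to a kernel of $\F_q$-dimension at most $m(s-1)$: the ``$s$ equations' worth'' of constraints $m s$ minus the $m$ degrees of freedom in $f_{k-1}$, but one of the $s$ contributions can be absorbed, leaving an $m(s-1)$-dimensional ambiguity. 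Having pinned down $f_{k-1}$ (up to this ambiguity), the coefficient of $X^{[d-2]}$ determines $f_{k-2}$ in terms of $f_{k-1}$ with \emph{no further loss} (the leading $\F_q$-linear map is again injective, and there is no new free direction because the ambiguity was already counted), and one continues downward: $f_{k-2}, f_{k-3}, \dots, f_0$ are each forced by the next equation given the already-determined higher coefficients. Thus the full solution set fibers over the at-most-$m(s-1)$-dimensional space of choices for $f_{k-1}$, and the remaining coefficients are affine-linear functions of that choice, giving the claimed bound.

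The main obstacle I anticipate is making the recursive ``no further loss'' step rigorous: one must check that each successive leading $\F_q$-linear operator (the one multiplying the newly-solved-for coefficient $f_{k-1-j}$) is injective, which again should follow from the nonvanishing of the relevant $q_{t,d-k}$ combined with the genericity of $\gam$, but one has to be careful that the interpolation polynomial $Q$ could a priori have some of these leading coefficients zero. The clean way around this is: if $q_{t,d-k}=0$ for all $t$, replace $d-k$ by the true top degree and re-run the argument — the bound only improves — or invoke that $Q$ is a \emph{nonzero} polynomial meeting the degree constraints so that \emph{some} leading coefficient structure is available; in any case the count ``$s$ blocks of size $m$, minus one block absorbed into the free parameter'' is the invariant that survives, yielding $\dim \le m(s-1)$. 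The bookkeeping with the $[\,\cdot\,]$ (Frobenius) exponents — since these are linearized polynomials, ``coefficient of $X^{[\ell]}$'' mixes $f_j^{[\ell-\text{something}]}$ rather than $f_j$ itself — is the routine-but-delicate part; I would set it up once carefully and then let the recursion run.
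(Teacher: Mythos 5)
Your overall setup --- substituting $f$ into \eq{factorization}, reading off one equation per power $X^{[\ell]}$, and observing that the resulting system is triangular in the coefficients $u_0,\dots,u_{k-1}$ of $f$ --- is the right one, but the counting argument at its heart has a genuine gap. When you isolate the ``diagonal'' term multiplying the newly-appearing coefficient $u_{k-1-j}$ in the equation for $X^{[d-1-j]}$, that term is $c_j\, u_{k-1-j}^{[d-k]}$ with the \emph{scalar} $c_j=\sum_{t=1}^{s} q_{t,d-k}\,\bigl(\gam^{q^{d-1-j}}\bigr)^{t-1}$. The induced $\Fq$-linear map on $u_{k-1-j}$ is therefore either injective (if $c_j\neq 0$) or identically zero (if $c_j=0$); there is no intermediate ``kernel of dimension at most $m(s-1)$,'' and the genericity of $\gam$ does not by itself prevent $c_j$ from vanishing --- the $q_{t,d-k}$ come out of the interpolation step and $\gam^{q^{d-1-j}}$ may well be a root of $\sum_t q_{t,d-k}X^{t-1}$. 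Consequently your claim that after the first step the recursion proceeds ``with no further loss'' is false: each step at which $c_j=0$ independently frees another $m$ dimensions, and your bookkeeping (``$ms$ constraints minus $m$ degrees of freedom, one block absorbed'') does not bound how often this can happen.

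The missing idea --- and the actual content of the paper's proof --- is that all of these diagonal scalars are values of one and the same nonzero ordinary polynomial of degree at most $s-1$ at distinct points. The paper works bottom-up: after dividing out a common Frobenius factor it may assume some $q_{i,0}$ with $1\le i\le s$ is nonzero, sets $A(X)=q_{1,0}+q_{2,0}X+\dots+q_{s,0}X^{s-1}$, and shows the coefficient of $X^{q^i}$ in \eq{factorization} equals $q_{0,i}+A(\gam^{q^i})u_i$ plus terms in $u_0,\dots,u_{i-1}$ only. Since $A$ is nonzero of degree at most $s-1$ and $\gam,\gam^{q},\dots,\gam^{q^{k-1}}$ are distinct (this is where $k\le m$ and the choice of $\gam$ enter), $A(\gam^{q^i})=0$ for at most $s-1$ indices $i$; each such index contributes at most $m$ to the dimension and every other $u_i$ is uniquely determined by its predecessors, giving the bound $m(s-1)$. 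Your top-down variant could be repaired in exactly the same way (using $B(X)=\sum_t q_{t,i_0}X^{t-1}$ for the largest $i_0$ with some $q_{t,i_0}\neq 0$), but without this root-counting step the proof does not go through.
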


\begin{proof}
For $i=0,1,2,\dots,s$, let
$$
Q_i(X)= \sum _{j \geq 0} q_{i,j} X^{q^j}
$$
If $q_{i,0}=0$ for $i=0,1,2,\dots,s$, then we replace $Q_i$ with $Q'_i$, where $Q_i(X) = Q'_i(X^q)$, in \eq{factorization} and the space of solutions remains unchanged. Therefore, one can assume that at least one $q_{i^{*},0}$ is non-zero for some $i^{*} \in \left\{0,1,2,\dots,s\right\}$. Furthermore, if $q_{1,0},q_{2,0},\dots,q_{s,0}$ are all zero, then so is $q_{0,0}$, otherwise there is no solution to \eq{factorization}. Thus, we can take $i^{*}$ from the set $\left\{1,2,\dots,s\right\}$.

Let us define the linearized polynomial $P(X)$ as
$$
P(X) = Q_0(X)+ \sum_{i=1}^{s} Q_i\bigl(f(\gam^{i-1}X)\bigr)
$$
and the polynomial $A(X)$ as
$$
A(X) = q_{1,0} + q_{2,0} X+\dots+q_{s,0} X^{s-1}
$$
Then the coefficient of $X^{q^i}$ in $P(X)$, for $i=0,1,\dots,k-1$, is equal to
\begin{align*}
q_{0,i}\ &+\  u_i\bigl( q_{1,0} + q_{2,0} \gam^{q^i}+\dots+q_{s,0} \gam ^{(s-1) q^i}\bigr)\\
&+\ u_{i-1}^q \bigl( q_{1,1} + q_{2,1} \gam^{q^i}+\dots+q_{s,1} \gam ^{(s-1) q^i}\bigr)\\
& + \dots +\  u_0^{q^i} \bigl( q_{1,i} + q_{2,i} \gam^{q^i}+\dots+q_{s,i} \gam ^{(s-1) q^i}\bigr)
\end{align*}
which can be simply expressed as
\be{i_coefficient}
q_{0,i} + A(\gam^{q^i}) u_i +  \sum^{i-1}_{j=0} a_j^{(i)} u_j^{q^{i-j}} 
\ee
for some elements $a_j^{(i)} \in \Fqm$. Now, suppose we want to find all possible solutions for $f(X)$ in \eq{factorization}. Then all the coefficients of $P(X)$ have to be equal to zero. In particular, for the coefficient of $X$ in $P(X)$: 
$$
A(\gam) u_0 + q_{0,0} =  0
$$
If $A(\gam)$ is non-zero, then $u_0 = -\frac{q_{0,0}}{A(\gam)}$. If $A(\gam)$ is zero but $q_{0,0}$ is not zero, then there is no solution for $u_0$ and consequently for $f(X)$. If both $A(\gam)$ and $q_{0,0}$ are zero, then we can set $u_0$ to any element of $\Fqm$. Then we find the solutions to $u_i$'s iteratively. For each $i$, suppose that $u_0,u_1,\dots,u_{i-1}$ are already computed. If $A(\gam^{q^i})$ is non-zero, then $u_i$ can be uniquely determined by \eq{i_coefficient}. Otherwise, we take all the elements of $\Fqm$ as possible solutions to $u_i$ and keep going for each of them separately. Notice that $A(X)$ is a non-zero polynomial of degree $s-1$ and $\gam, \gam^q, \dots, \gam^{q^{k-1}}$ are all distinct elements of $\Fqm$. Therefore, $A(\gam ^ {q^i})$ is equal to zero for at most $s-1$ possible values of $i$. This implies that the total number of solutions for $f(X)$ to \eq{factorization} is at most $q^{m(s-1)}$ which proves the lemma.
\end{proof}

\begin{corollary}
The affine space of solutions to \eq{factorization} can be computed with quadratic complexity in terms of dimension $n$. \end{corollary}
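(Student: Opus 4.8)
The proof of \Lref{list_bound} is already constructive: it exhibits an iterative procedure that, after normalizing the $Q_i$ so that some $q_{i^\ast,0}$ with $i^\ast\in\{1,\dots,s\}$ is nonzero, processes the unknowns $u_0,u_1,\dots,u_{k-1}$ in order and, for each $i$, either pins down $u_i$ uniquely or declares it a free parameter. My plan is simply to implement that procedure carefully and count the field operations over $\Fqm$ it uses, checking that the total is $O(n^2)$ with the design parameter $s$ treated as a constant.

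First I would handle the preprocessing: compute the Frobenius images $\gamma^{q^0},\gamma^{q^1},\dots,\gamma^{q^{k-1}}$ (one Frobenius application apiece), and from them the evaluations $A(\gamma^{q^i})$ for $i=0,\dots,k-1$, where $A$ has degree $s-1$; this costs $O(sk)=O(n)$ operations and pins down the at most $s-1$ indices $i$ with $A(\gamma^{q^i})=0$, i.e.\ the positions where $u_i$ becomes free. Then I would run the main loop over $i=0,\dots,k-1$: at step $i$ the equation obtained by setting the coefficient of $X^{q^i}$ in $P(X)$ to zero is~\eq{i_coefficient}, namely $q_{0,i}+A(\gamma^{q^i})u_i+\sum_{j=0}^{i-1}a_j^{(i)}u_j^{q^{i-j}}=0$, with $a_j^{(i)}$ the value at $\gamma^{q^i}$ of the degree-$(s-1)$ polynomial $q_{1,i-j}+q_{2,i-j}X+\dots+q_{s,i-j}X^{s-1}$. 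Given the precomputed powers of $\gamma^{q^i}$ each $a_j^{(i)}$ costs $O(s)$, and the Frobenius powers $u_j^{q^{i-j}}$ are obtained from those of the previous step by one further Frobenius application each, so step $i$ costs $O(si)=O(n)$ and the loop costs $O(sk^2)=O(n^2)$. When $A(\gamma^{q^i})\neq0$ we solve for $u_i$; when $A(\gamma^{q^i})=0$ we do \emph{not} branch over all $q^m$ values of $u_i$ as in the proof of \Lref{list_bound}, but keep $u_i$ as a symbolic parameter and propagate, so that each subsequent $u_j$ is carried as an $\Fq$-affine function of the at most $s-1$ free parameters. What comes out is a particular solution together with a generating description of the associated $\Fq$-linear space — or, equivalently, the pivot positions, the values $A(\gamma^{q^i})$, and the $O(n^2)$ coefficients $a_j^{(i)}$ — which is exactly a description of the affine space of solutions to~\eq{factorization}.

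The one place that needs care, and the main obstacle I would flag, is bounding the cost of the symbolic bookkeeping for the free parameters: a free $u_i$ ranges over all of $\Fqm$ and its Frobenius twists $u_i^{q^t}$ reappear in later equations, so one must argue that propagating these dependencies does not inflate the running time. Since there are at most $s-1=O(1)$ free parameters this only multiplies the per-step cost by a bounded factor; alternatively, one observes that the iterative parametrization produced above is itself a complete description of the solution space and is computed in $O(n^2)$ operations, so no further manipulation is required. (Enumerating the up to $q^{m(s-1)}$ individual solutions guaranteed by \Lref{list_bound} is of course not polynomial; the quadratic bound refers, as usual, to computing the affine space in generator/parametric form.) Combining this with the $O(n)$ preprocessing gives the claimed complexity, quadratic in the dimension $n$.
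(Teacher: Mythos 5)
Your proposal is correct and follows essentially the same route as the paper, which gives no separate proof of this corollary but points back to the iterative coefficient-by-coefficient procedure in the proof of \Lref{list_bound}; your operation count and your replacement of the lemma's ``branch over all of $\Fqm$'' step by symbolic propagation of the at most $s-1$ free parameters is exactly the intended reading. The only point you might add for completeness (also glossed over in the paper) is that the coefficients of $X^{q^i}$ in $P(X)$ for $k \leq i \leq d-1$ must also be forced to vanish; these impose further $\Fq$-affine constraints on the free parameters and are handled within the same $O(n^2)$ budget.
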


%begin{proof}
%Notice that \eq{i_coefficient}, for $i=0,1,\dots,k-1$, is in fact a linear system of equations in terms of $u_0, u_1,\dots,u_{k-1}$ which has a lower-triangular form. Therefore, for some $e \leq s-1$, a matrix $\bP \in \Fqm^{k \times e} $ and a vector $\bold{v} \in \Fqm ^ k$ can be computed such that the affine space $\bP \bold{x} + \bold{v}$, for $\bold{x} \in \Fq^e$, contains all the solutions to \eq{factorization}, using $O(n^2 s^2)$ field operations over $\Fqm$. 
%\end{proof}

\subsection{Correctness of the Algorithm and Code Parameters}

In this subsection, we first establish the correctness of our list-decoding algorithm. Then we compute the corresponding decoding radius.

\begin{lemma}
\label{lemma_interpolation}
The particular choice of $d$ in \eq{define-d} guarantees existence of a non-zero solution for interpolation polynomial $Q$ that satisfies \eq{interpolation}.  
\end{lemma}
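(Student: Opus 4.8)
The plan is to treat the existence of $Q$ as a purely linear-algebraic question over $\Fqm$. Write each interpolation polynomial as $Q_i(X)=\sum_{j\ge 0} q_{i,j}X^{[j]}$; then $Q(X,Y_1,\dots,Y_s)=Q_0(X)+Q_1(Y_1)+\dots+Q_s(Y_s)$ is completely specified by its list of coefficients $q_{i,j}$, and each constraint in \eq{interpolation} is a homogeneous $\Fqm$-linear equation in these coefficients. A nonzero $Q$ therefore exists as soon as the number of free coefficients strictly exceeds the number $r$ of interpolation constraints, and the role of the choice \eq{define-d} is exactly to force this inequality.

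First I would count the unknowns. Since $Q_0$ is allowed $q$-degree at most $d-1$ it carries $d$ coefficients, and since each of $Q_1,\dots,Q_s$ is allowed $q$-degree at most $d-k$ each carries $d-k+1$ coefficients (we are in the regime $d\ge k$, which holds since $r\ge k-1$); hence the total number of free coefficients is
$$
d + s(d-k+1) = (s+1)d - s(k-1).
$$
Next I would observe that for a basis element $(x_i,y_{i,1},\dots,y_{i,s})$ of the received space $U$,
$$
Q(x_i,y_{i,1},\dots,y_{i,s}) = \sum_{j=0}^{d-1} q_{0,j}\, x_i^{[j]} \ +\ \sum_{l=1}^{s}\sum_{j=0}^{d-k} q_{l,j}\, y_{i,l}^{[j]},
$$
which is an $\Fqm$-linear form in the unknown coefficients; thus \eq{interpolation} amounts to a system of $r$ homogeneous linear equations over $\Fqm$ in the $(s+1)d-s(k-1)$ unknowns.

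Finally I would close the argument by comparing the two counts. The inequality $(s+1)d-s(k-1)>r$ is equivalent to $d>\bigl(r+s(k-1)\bigr)/(s+1)$, and the least integer satisfying it is precisely $\ceil{\bigl(r+s(k-1)+1\bigr)/(s+1)}$, which is the value of $d$ fixed in \eq{define-d}; hence the number of unknowns is at least $r+1$. Since a homogeneous linear system over the field $\Fqm$ with strictly more unknowns than equations has a nontrivial solution, any such solution furnishes a nonzero $Q$ satisfying \eq{interpolation}. The only point requiring care --- the only thing one might call an obstacle --- is the ceiling bookkeeping: one must verify that \eq{define-d} is exactly this threshold and not off by one, which follows from the identity $\ceil{(a+1)/b}=\floor{a/b}+1$ for positive integers $a,b$. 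Beyond that there is no real difficulty; the lemma is a dimension count.
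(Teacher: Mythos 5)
Your proposal is correct and follows essentially the same argument as the paper: count the $(s+1)d - s(k-1)$ unknown coefficients, note that \eq{interpolation} imposes $r$ homogeneous linear conditions, and check that the choice of $d$ in \eq{define-d} makes the number of unknowns exceed $r$. The extra care you take with the ceiling and with writing out the linear form explicitly is harmless additional detail, not a different route.
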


\begin{proof}
\eq{interpolation} defines a homogeneous system of $r$ linear equations. The number of unknown coefficients is equal to
$$
d + (d-k+1)s = d(s+1) - s(k-1)
$$
A non-zero solution for this homogeneous system of linear equations is guaranteed if the number of equations is strictly less than the number of variables. i.e.
\begin{equation*}
\begin{split}
r &\leq  d(s+1) - s(k-1) - 1  \Leftrightarrow \\
d &\geq \frac{r + s(k-1) + 1}{s+1} 
\end{split}
\end{equation*}
This is guaranteed by the choice of $d$ in \eq{define-d}.
\end{proof}

We form the following linearized polynomial $E(X)$ wherein $\fbu(X)$ is the message polynomial and $Q(X,Y_1,\dots,Y_L)$ is the interpolation polynomial provided by the list-decoding algorithm. 
\begin{align*}
E(X) &= Q\bigl(X,\fbu(X),\fbu(\gam X), \dots, \fbu(\gam^{s-1} X)\bigr) \\
&= Q_0(X) + \sum^{s}_{i=1}Q_i \otimes \fbu(\gam^{i-1} X)
\end{align*}
Let $\rho$ and $t$ denote the number of erasures and errors in the received subspace $U$, respectively. Hence, the dimension of $U$ is in fact equal to $ r = n - \rho + t$.

\begin{lemma}
\label{lemma_roots}
The linearized polynomial $E(X)$ has at least $n-\rho$ linearly independent roots in $\Fqm$.
\end{lemma}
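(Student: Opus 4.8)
The plan is to leverage the $\Fq$-linearity of the interpolation polynomial together with the ``diagonal'' structure of the transmitted codeword $V$. First I would observe that, since $Q_0,Q_1,\dots,Q_s$ are linearized polynomials over $\Fqm$, the evaluation map $(x,y_1,\dots,y_s)\mapsto Q_0(x)+Q_1(y_1)+\dots+Q_s(y_s)$ is $\Fq$-linear on $W$; the interpolation constraints \eqref{interpolation} make it vanish on an entire basis of $U$, so in fact $Q(w)=0$ for every $w\in U$.

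Next I would record the shape of the codewords. Any $v\in V$ is an $\Fq$-combination $\sum_i c_i\bigl(\al_i,f(\al_i),f(\gam\al_i),\dots,f(\gam^{s-1}\al_i)\bigr)$ with $c_i\in\Fq$, where $f=\fbu$; since $f$, and hence each $f(\gam^{j}X)$, is linearized, this equals $\bigl(\bet,f(\bet),f(\gam\bet),\dots,f(\gam^{s-1}\bet)\bigr)$ with $\bet=\sum_i c_i\al_i\in\langle A\rangle$. Thus $\bet\mapsto v_\bet:=\bigl(\bet,f(\bet),\dots,f(\gam^{s-1}\bet)\bigr)$ is an $\Fq$-linear isomorphism of $\langle A\rangle$ onto $V$, and directly from the definition of $E$ we get $E(\bet)=Q(v_\bet)$ for every $\bet\in\langle A\rangle$.

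The last step is to count roots. By the definition of the operator channel with $\rho$ erasures, the received space $U$ contains an $(n-\rho)$-dimensional subspace of $V$, so $\dim_{\Fq}(V\cap U)\geq n-\rho$. Let $B\subseteq\langle A\rangle$ be the preimage of $V\cap U$ under $\bet\mapsto v_\bet$; this is an $\Fq$-subspace with $\dim_{\Fq}B\geq n-\rho$. For every $\bet\in B$ we have $v_\bet\in V\cap U\subseteq U$, hence $Q(v_\bet)=0$ by the first step, i.e.\ $E(\bet)=0$ by the second. Therefore every element of $B$ is a root of $E$, and an $\Fq$-basis of $B$ yields $n-\rho$ linearly independent roots of $E$ in $\Fqm$, which is the claim.

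The only point that is not pure bookkeeping is the bound $\dim_{\Fq}(V\cap U)\geq n-\rho$: this is precisely where the meaning of ``$\rho$ erasures'' enters, and it is the mechanism that forces $E$ to have a large $\Fq$-independent set of roots. Everything else follows from the linearity of $Q$ as a map on $W$ and from the fact that $f$ and its twists $f(\gam^{j}X)$ are linearized in $X$.
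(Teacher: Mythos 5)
Your proposal is correct and follows essentially the same route as the paper's own proof: intersect $V$ with $U$ to get an $(n-\rho)$-dimensional subspace on which the linearized interpolation polynomial $Q$ vanishes, note that every element of $V$ has the form $\bigl(\beta,\fbu(\beta),\dots,\fbu(\gam^{s-1}\beta)\bigr)$ for $\beta$ in the span of the $\al_i$'s, and conclude that the corresponding $\beta$'s are $n-\rho$ linearly independent roots of $E(X)$. You spell out the $\Fq$-linear isomorphism $\beta\mapsto v_\beta$ and the transfer of linear independence more explicitly than the paper does, but the argument is the same.
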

\begin{proof}
Let $U'$ denote the intersection of the transmitted codeword $V$ and the received subspace $U$. Then $U'$ is a subspace of the received vector space $U$ with dimension $n-\rho$. Since $Q$ is a linearized polynomial
$$
Q(x,y_1,\dots,y_s)=0
$$
for any $(x,y_1,\dots,y_s) \in U'$. On the other hand, $(x,y_1,\dots,y_s)$ is also an element of the transmitted codeword $V$. Therefore, 
$$
(x,y_1,\dots,y_s) = \bigl(\beta,\fbu(\beta), \fbu(\gam\beta), \dots, \fbu(\gam^{s-1}\beta)\bigr)
$$
for some $\beta$ in the linear span of $\al_1,\al_2,\dots,\al_n$. Therefore, $\beta$ is a root for the polynomial $E(X)$. Hence, there are at least $n-\rho$ linearly independent roots for $E(X)$. 
\end{proof}

\begin{corollary}
\label{cor_roots}
 If $d \leq n-\rho$, then the linearized polynomial $E(X)$ is identically zero.
\end{corollary}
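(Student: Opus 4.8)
The plan is to compare the $q$-degree of $E(X)$ with the $\Fq$-dimension of its space of roots, and conclude that the former is too small to allow the latter unless $E(X)$ vanishes identically. First I would bound the $q$-degree of $E(X) = Q_0(X) + \sum_{i=1}^{s} Q_i \otimes \fbu(\gam^{i-1}X)$. By construction $Q_0$ has $q$-degree at most $d-1$, each $Q_i$ with $1 \le i \le s$ has $q$-degree at most $d-k$, and $\fbu(X)$ — hence each $\fbu(\gam^{i-1}X)$ — has $q$-degree at most $k-1$. Since the composition $\otimes$ of linearized polynomials adds $q$-degrees, each term $Q_i \otimes \fbu(\gam^{i-1}X)$ has $q$-degree at most $(d-k)+(k-1) = d-1$. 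Therefore $E(X)$ has $q$-degree at most $d-1$; equivalently, its ordinary degree is at most $q^{d-1}$.

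Next I would invoke the standard structural fact about linearized polynomials: a nonzero linearized polynomial of $q$-degree $e$ over $\Fqm$ has at most $q^e$ roots in any extension field, and its set of roots forms an $\Fq$-subspace of dimension at most $e$. By \Lref{lemma_roots}, $E(X)$ has at least $n-\rho$ linearly independent roots in $\Fqm$, so the $\Fq$-subspace of roots of $E(X)$ has dimension at least $n-\rho$.

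Finally I would derive the contradiction. If $E(X)$ were not identically zero, its root space would have $\Fq$-dimension at most $d-1$; but we have just shown this dimension is at least $n-\rho$, and by hypothesis $n-\rho \ge d > d-1$. This is impossible, so $E(X) \equiv 0$. I do not expect any genuine obstacle here; the only points requiring a little care are the degree accounting for the composition $Q_i \otimes \fbu$ (so that the bound $d-1$ really comes out), and noting that $d > d-1$ makes the strict comparison with the $n-\rho \ge d$ independent roots go through.
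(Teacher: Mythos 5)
Your proposal is correct and follows essentially the same route as the paper: bound the $q$-degree of each term $Q_i \otimes \fbu(\gam^{i-1}X)$ by $(d-k)+(k-1)=d-1$, conclude $E(X)$ has $q$-degree at most $d-1$, and compare against the $n-\rho \geq d$ linearly independent roots from \Lref{lemma_roots}. You merely make explicit the standard fact (a nonzero linearized polynomial of $q$-degree $e$ has a root space of $\Fq$-dimension at most $e$) that the paper leaves implicit.
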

\begin{proof}
The $q$-degree of $\fbu(X)$ is at most $k-1$. Therefore, the $q$-degree of $Q_i \otimes \fbu(\gam^{i-1}X)$ is at most 
$$
d - k + k - 1 = d - 1 
$$
for $i=1,\dots,L$. Also, the $q$-degree of $Q_0(X)$ is at most $d-1$. Thus the $q$-degree of $E(X)$ is at most $d-1$. On the other hand, $E(X)$ has at least $n-\rho$ linearly independent roots by \Lref{lemma_roots}. Therefore, $E(X)$ must be the all zero polynomial.
\end{proof}

\begin{theorem}
\label{thm_correctness}
The output of our list-decoding algorithm is a list of size at most $q^{m(s-1)}$ which includes the transmitted message $\bu$ provided that
\be{error_bound}
s\rho + t < ns - s(k-1)
\ee
\end{theorem}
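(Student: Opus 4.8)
The plan is to combine the three facts already established --- the interpolation lemma \Lref{lemma_interpolation}, the vanishing corollary \Cref{cor_roots}, and the list-size bound \Lref{list_bound} --- the only new ingredient being a short arithmetic check that the hypothesis \eq{error_bound} is exactly equivalent to the inequality $d \le n-\rho$ under which \Cref{cor_roots} applies.

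First I would invoke \Lref{lemma_interpolation}: because $d$ is set as in \eq{define-d}, the interpolation step of the algorithm always returns a nonzero $Q(X,Y_1,\dots,Y_s)$ of the prescribed shape satisfying \eq{interpolation}, so the decoder does not fail at step~2. (It is worth recording that under \eq{error_bound} one has $s\rho < ns-s(k-1)$, hence $\rho \le n-k$, hence $r = n-\rho+t \ge k$, and then \eq{define-d} gives $d \ge k$, so the degree constraints on the $Q_i$ are meaningful.)

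Next comes the arithmetic crux. Since $r = n-\rho+t$ and $n-\rho \in \Z$, we have
$$
d \,=\, \ceil{\frac{r+s(k-1)+1}{s+1}} \,\le\, n-\rho
\qquad\Longleftrightarrow\qquad
r + s(k-1) + 1 \,\le\, (s+1)(n-\rho),
$$
and substituting $r = n-\rho+t$ on the right and cancelling one copy of $n-\rho$ turns this into $t + s(k-1) + 1 \le s(n-\rho)$, i.e. $s\rho + t \le ns - s(k-1) - 1$, which is precisely \eq{error_bound}. Hence \eq{error_bound} holds if and only if $d \le n-\rho$.

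Finally, I would apply \Cref{cor_roots}: with $d \le n-\rho$, the linearized polynomial $E(X) = Q\bigl(X,\fbu(X),\fbu(\gam X),\dots,\fbu(\gam^{s-1}X)\bigr)$ is identically zero, so the message polynomial $\fbu(X)$ is one of the solutions of \eq{factorization} that the message-recovery step computes; thus the coefficient vector $\bu$ appears in the output list. By \Lref{list_bound}, the set of such solutions has dimension at most $m(s-1)$, so the output list has size at most $q^{m(s-1)}$, which completes the proof. The only place demanding any care is the ceiling manipulation above, and in particular the observation that $n-\rho$ is an integer, which is what permits removing the ceiling without loss; everything else is a direct citation of the preceding results.
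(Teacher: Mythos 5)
Your proposal is correct and follows exactly the same route as the paper's proof: existence of $Q$ via \Lref{lemma_interpolation}, the equivalence of \eq{error_bound} with $d \le n-\rho$ after substituting $r=n-\rho+t$, then \Cref{cor_roots} to conclude $E(X)\equiv 0$ and \Lref{list_bound} for the list size. In fact you spell out the ceiling manipulation (and the integrality of $n-\rho$ that justifies it) more explicitly than the paper does, which only states the equivalence.
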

\begin{proof}
The existence of non-zero interpolation polynomial $Q$ that satisfies \eq{interpolation} is guaranteed by \Lref{lemma_interpolation}. Then by \Cref{cor_roots}, $E(X)$ is the all zero polynomial provided that
\be{thm_correctness_1}
\left\lceil \frac{r+s(k-1)+1}{s+1}\right\rceil \leq (n-\rho)
\ee
where we have used the expression for $d$ from \eq{define-d}. We plug in $r=n-\rho+t$ into \eq{thm_correctness_1}. Then observe that \eq{thm_correctness_1} is in fact equivalent to
$$
s\rho + t < ns - s(k-1)
$$
Thus this condition on the number of errors and erasures implies that $E(X)$ is identically zero. Therefore, the message polynomial $\fbu(X)$ is a solution to \eq{factorization}. There are at most $q^{m(s-1)}$ solutions to \eq{factorization} by \Lref{list_bound}. Therefore, the list size is at most $q^{m(s-1)}$.
\end{proof}

Now, we turn to the parameters of the proposed construction. The ambient space $W$ is given in \eq{ambient_space} which has dimension equal to $n+sm$. The symbol rate $R$ and the packet rate $\Rs$ of the code can be computed as defined in \eq{symbol_rate} and \eq{packet_rate}:
\begin{equation*}
\begin{split}
R &= \frac{\text{log}_q(\text{size of the code})}{n(\text{dim}(W))} \, = \, \frac{km}{n(n+sm)}\\
\Rs &= \frac{\text{log}_{q^m}(\text{size of the code})}{n} \, = \, \frac{k}{n}
\end{split}
\end{equation*}
The normalized decoding radius, in which erasures have weight $s$, is given by \Tref{thm_correctness} as 
\begin{equation}
\label{decoding_radius}
\begin{split}
&\, s - \frac{s(k-1)}{n}\\
& \approx \,  s \,-\,  s^2(1+\frac{n}{ms}) R
\end{split}
\end{equation}
In the regime where $n$ is much smaller than $ms$, the error decoding radius can be approximated as $s - s^2 R$. 

The normalized decoding radius in terms of the packet rate $\Rs$ can be approximated as $s(1-\Rs)$. It implies that, for any packet rate $\Rs$, we basically achieve any decoding radius by letting the list size to be large enough.

\vspace{1ex}
%=======================================================================%
%                                                                       %
%    List-decoding of Gabidulin Codes       %              
%                                                                       %
%=======================================================================%
\section{List-decoding of Gabidulin Codes}
\vspace{.25ex}
\label{sec:four}
In this section, we first introduce a folded version of Gabidulin codes. Then, we propose a list-decoding algorithm which provides decoding radius up to the Singleton bound $1-R$, the best possible trade-off between the rate and error-correction radius.

Let $\gam$ be a primitive element of $\Fqm$. Let $\cC$ denote the Gabidulin code constructed with parameters $\al_i = \gam ^{[i-1]}$ as discussed in Section \ref{sec:two_B}. Let also $h$ be a positive integer that divides $n$ and let $g=n/h$.
\\\textbf{Definition.\,} (Folded Gabidulin Code)
\\\ The $h$-folded version of Gabidulin code $\cC$ is a code whose codewords are elements of $\Fq^{g \times hm}$. The encoding of a message polynomial $\fbu(X)$ of $q$-degree at most $k-1$ has as its $i$-th row, for $0 \leq i < g$, the $h$-tuple $\bigl(\fbu(\gam^{ih}), \fbu(\gam^{ih+1}), \dots, \fbu(\gam^{(i+1)h-1})\bigr)$,\ which can be regarded as an element in $\Fq^{hm}$. 
\hfill\raisebox{-0.5ex}{$\Box$}\vspace{1.0ex}

Notice that folding does not change the rate. The rate of folded version of code $\cC$ is equal to the rate of $\cC$ which is equal to $k/n$. 

Before going into the details of list-decoding algorithm, we would like to clarify the difference between the notion of "error" in subspace codes and rank-metric codes. Suppose that a codeword $\bX$ in code $\cC$ is transmitted and a word $\bY$ with $t$ errors is received i.e. rank$(\bX-\bY) = t$. Now consider $\left\langle \bX \right\rangle$ and $\left\langle \bY \right\rangle$ in the context of subspace codes. Then $\left\langle \bY \right\rangle$ is corrupted with $t$ errors and $t$ erasures with respect to $\left\langle \bX \right\rangle$. In fact, in rank-metric codes, there is no notion of "erasure" and each error is corresponding to one error and one erasure in the context of subspace codes.

For $0 \leq i \leq g$ and $0 \leq j \leq m$, let $y_{i,j} \in \Fqm$ denote the $(i,j)$-th coordinate of received word $\bY$ regarded as a matrix in $ \Fqm^{g \times m}$. Let $s$ be a positive integer less than or equal to $h$. We propose a decoding algorithm based on interpolating an $s+1$-variate linearized polynomial $Q(X,Y_1,\dots,Y_s)$. The $q$-degree of $Q$ is characterized in terms of parameter $d$ which is  set as follows:
\be{define_d2}
d =  \left\lceil \frac{g(h-s+1)+s(k-1)+1}{s+1}\right\rceil
\ee
This particular choice of $d$ will guarantee existence of the interpolation polynomial.
\\\textbf{List-decoding algorithm of folded Gabidulin codes\,} 

\begin{enumerate}

\item \textit{Interpolation:} Construct a nonzero multivariate linearized polynomial $Q(X,Y_1,Y_2,\dots,Y_s)$ of the form
$$
Q(X,Y)=Q_0(X)+Q_1(Y)+Q_2(Y_2)+\dots+Q_s(Y_s)
$$
where $Q_i$'s are linearized polynomials over $\Fqm$, $Q_0$ has $q$-degree at most $d-1$ and the $q$-degree of all other $Q_i$'s is at most $d-k$ subject to the constraint that
\be{interpolation2}
Q(\gam^{ih+j},y_{i,j},y_{i,j+1},\dots,y_{i,j+s-1})=0
\ee
for $i = 0,1,\dots,g-1$ and $j=0,1,\dots,h-s$.
\item \textit{Message recovery:} Find all the solutions $f(X) \in \L_{q^m}[X]$ to the following equation:
\be{factorization_2}
Q\bigr(X,f(\gam X),f(\gam^2 X),\dots,f(\gam^{s-1}X)\bigl) = 0
\ee
The decoder outputs coefficients of each solution $f(X)$ as a vector of length $k$. 
\end{enumerate}

The interpolation step is very similar to the interpolation step of the list-decoding algorithm discussed in Section \ref{sec:three_A}. It can be done using either the straightforward Gaussian elimination method or an efficient interpolation algorithm in the ring of linearized polynomials as presented in \cite{XYS}, similar to the algorithm in Section \ref{sec:three_A}. The message recovery step is exactly similar to that of list-decoding algorithm in Section \ref{sec:three_A}. It can be also done as discussed in Section \ref{sec:three_B}. The total complexity of our list-decoding algorithm is then quadratic in terms of dimension $n$.
 
Next, we establish correctness of the proposed list-decoding algorithm and compute the decoding radius of the code.

\begin{lemma}
\label{lemma_interpolation2}
The particular choice of $d$ in \eq{define_d2} guarantees existence of a non-zero solution for interpolation polynomial $Q$ that satisfies \eq{interpolation2}.  
\end{lemma}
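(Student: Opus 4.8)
The plan is to repeat the dimension-counting argument from the proof of \Lref{lemma_interpolation}, now with the number of interpolation constraints supplied by \eq{interpolation2}. Because each $Q_i$ is a linearized polynomial, the value $Q(\gam^{ih+j},y_{i,j},\dots,y_{i,j+s-1}) = Q_0(\gam^{ih+j}) + \sum_{\ell=1}^{s} Q_\ell(y_{i,j+\ell-1})$ is an $\Fqm$-linear functional of the unknown coefficients of $Q_0,Q_1,\dots,Q_s$, so \eq{interpolation2} is a homogeneous system of $\Fqm$-linear equations, one equation for each admissible pair $(i,j)$.

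First I would count the equations: $i$ ranges over $\{0,1,\dots,g-1\}$ and $j$ over $\{0,1,\dots,h-s\}$, so there are exactly $g(h-s+1)$ of them. Next I would count the unknowns: $Q_0$ has $q$-degree at most $d-1$ and hence $d$ coefficients, while each of $Q_1,\dots,Q_s$ has $q$-degree at most $d-k$ and hence $d-k+1$ coefficients, for a total of
$$
d + s(d-k+1) \;=\; d(s+1) - s(k-1)
$$
unknowns. A homogeneous linear system in which the number of unknowns strictly exceeds the number of equations always admits a nonzero solution, so it suffices to verify
$$
g(h-s+1) \;\le\; d(s+1) - s(k-1) - 1 ,
$$
which is equivalent to $d \ge \bigl(g(h-s+1)+s(k-1)+1\bigr)/(s+1)$. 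Since $d$ in \eq{define_d2} is precisely the ceiling of this quantity, the inequality holds and a nonzero $Q$ satisfying \eq{interpolation2} exists.

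I do not expect a substantive obstacle. The only points requiring a line of care are (i) observing that every constraint in \eq{interpolation2} is a single $\Fqm$-linear equation — immediate from the $\Fq$-linearity of the $Q_i$ — and (ii) the degenerate regime $d < k$, in which $Q_1,\dots,Q_s$ are forced to be zero; there the unknown count only decreases, which makes the required inequality no harder to satisfy, so the conclusion is unaffected (and, as in \Lref{lemma_interpolation}, such a regime can simply be excluded by the parameters of interest). Everything else is the same bookkeeping as in Section \ref{sec:three}.
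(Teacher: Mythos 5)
Your proposal is correct and follows essentially the same argument as the paper: count the $g(h-s+1)$ homogeneous linear constraints from \eq{interpolation2}, count the $d(s+1)-s(k-1)$ unknown coefficients, and observe that the ceiling in \eq{define_d2} makes the number of unknowns strictly exceed the number of equations. The extra remarks on $\Fqm$-linearity of the constraints and the degenerate case $d<k$ are harmless elaborations of what the paper leaves implicit.
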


\begin{proof}
\eq{interpolation2} is in fact a homogeneous system of $g(h-s+1)$ linear equations. The number of unknown coefficients is given by
$$
d + (d-k+1)s = d(s+1) - s(k-1)
$$
If the number of equations is strictly less than the number of variables in a homogeneous system of linear equations, then a non-zero solution is guaranteed to exist . i.e.
\begin{equation*}
\begin{split}
g(h-s+1) &\leq  d(s+1) - s(k-1) - 1  \Leftrightarrow \\
d &\geq \frac{g(h-s+1) + s(k-1) + 1}{s+1} 
\end{split}
\end{equation*}
This is guaranteed by the choice of $d$ in \eq{define_d2}.
\end{proof}
 
Let $\bU \in \Fq^{g \times hm} $ denote the codeword corresponding to the message polynomial $\fbu(X)$.  Then $\left\langle \bU \right\rangle \cap \left\langle \bY \right\rangle$, the intersection of the row spaces of matrices $\bU$ and $\bY$, has dimension $g-t$, where $t$ is the rank of error. We also define the linearized polynomial $E(X)$ as follows:
\begin{align*}
E(X) &= Q\bigl(X,\fbu(X),\fbu(\gam X), \dots, \fbu(\gam^{s-1} X)\bigr) \\
&= Q_0(X) + \sum^{s}_{i=1}Q_i \otimes \fbu(\gam^{i-1} X)
\end{align*}

\begin{lemma}
\label{lemma_roots2}
There are at least $(g-t)(h-s+1)$ linearly independent roots in $\Fqm$ for the linearized polynomial $E(X)$.
\end{lemma}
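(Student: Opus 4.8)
The plan is to mimic the proof of \Lref{lemma_roots} from the subspace-code setting, but to keep careful track of how the folded structure produces \emph{many} interpolation constraints per surviving dimension. First I would fix a basis of the intersection $\left\langle \bU \right\rangle \cap \left\langle \bY \right\rangle$; since this space has dimension $g-t$, I can pick $g-t$ linearly independent vectors $\bw_1,\dots,\bw_{g-t}$ that lie simultaneously in the row space of $\bU$ and in the row space of $\bY$. The key observation is that being a row-space element of $\bU$ means such a vector is an $\Fq$-linear combination of the rows of $\bU$, i.e. it has the form
$$
\bigl(\fbu(\gam^{\ell}), \fbu(\gam^{\ell+1}), \dots\bigr)
$$
evaluated at an appropriate field element — more precisely, by $\Fqm$-linearity of $\fbu$, a linear combination $\sum_i c_i (\text{row }i)$ equals $\bigl(\fbu(\beta), \fbu(\gam\beta), \dots, \fbu(\gam^{h-1}\beta)\bigr)$ for $\beta = \sum_i c_i \gam^{ih}$, because $\fbu(\gam^{j}\cdot\sum_i c_i\gam^{ih}) = \sum_i c_i \fbu(\gam^{ih+j})$. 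So each $\bw_\kappa$ simultaneously equals $\bigl(\fbu(\beta_\kappa), \fbu(\gam\beta_\kappa),\dots,\fbu(\gam^{h-1}\beta_\kappa)\bigr)$ for some $\beta_\kappa$ in the $\Fq$-span of $\{\gam^{ih}\}_{i=0}^{g-1}$, and these $\beta_\kappa$ are linearly independent over $\Fq$ (because the $\bw_\kappa$ are).

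Next I would exploit that $\bw_\kappa$ also lies in $\left\langle \bY \right\rangle$. Writing $\bw_\kappa$ as a folded vector whose $i$-th block ($0 \le i < g$) is the $h$-tuple of entries $(\fbu(\gam^{ih}),\dots,\fbu(\gam^{(i+1)h-1}))$, membership in the row space of $\bY$ means each coordinate $\fbu(\gam^{ih+j})$ of $\bw_\kappa$ is the corresponding $\Fq$-linear combination of the entries of $\bY$ — consistently the same combination across the whole row block. This is exactly the structure the interpolation constraint \eq{interpolation2} was built to kill: for each fixed block $i$ and each window start $j \in \{0,1,\dots,h-s\}$, the tuple $(\gam^{ih+j}, y_{i,j},\dots,y_{i,j+s-1})$ is annihilated by $Q$, and by $\Fqm$-linearity of $Q$ the same-combination vector $(\beta_\kappa\gam^{j'}$-type point, $\fbu(\gam^{ih+j}),\dots,\fbu(\gam^{ih+j+s-1}))$ is annihilated too. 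Feeding this into the definition of $E(X)$, each such window yields $E(\gam^{j}\beta_\kappa) = 0$: indeed $E(\gam^j\beta_\kappa) = Q(\gam^j\beta_\kappa, \fbu(\gam^{j+1}\beta_\kappa),\dots) = Q(\gam^{ih+j},\fbu(\gam^{ih+j}),\dots,\fbu(\gam^{ih+j+s-1})) = 0$. Running $\kappa$ over $g-t$ values and $j$ over $h-s+1$ values gives the claimed count — provided all these roots are $\Fq$-linearly independent.

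The main obstacle, and the step I would spend the most care on, is establishing that the $(g-t)(h-s+1)$ field elements $\{\gam^{j}\beta_\kappa : 1 \le \kappa \le g-t,\ 0 \le j \le h-s\}$ are linearly independent over $\Fq$ (so that $E$, being linearized with these many independent roots, is forced to have large actual degree and hence — in the subsequent corollary — must vanish identically). Here I would appeal to the structure of the evaluation points: $\gam$ is primitive in $\Fqm$, and the $\beta_\kappa$ lie in the $g$-dimensional $\Fq$-space $\left\langle \gam^0,\gam^h,\dots,\gam^{(g-1)h}\right\rangle$, so the full collection $\{\gam^j\beta_\kappa\}$ lies in the $n = gh$-dimensional space $\left\langle \gam^0,\gam^1,\dots,\gam^{n-1}\right\rangle = \Fqm$ when $n = m$, or more generally in a space of the right dimension; the point is that multiplying the independent block $\{\beta_\kappa\}$ by the "window shifts" $\gam^0,\dots,\gam^{h-s}$ keeps everything independent because the window shifts interact with distinct coordinate positions. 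I would make this precise by observing that the map sending an element to its coordinate vector in the normal-type basis is $\Fq$-linear and that the $\beta_\kappa$-block and the $\gam^j$-shift act on "disjoint" index ranges modulo the block structure — essentially the same independence argument that underlies why folded Reed–Solomon evaluation points behave well. Once this independence is in hand, the lemma follows, and it sets up exactly the degree-versus-roots contradiction needed in the analogue of \Cref{cor_roots}.
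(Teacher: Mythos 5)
Your proof follows essentially the same route as the paper's: represent a basis of $\left\langle \bU \right\rangle \cap \left\langle \bY \right\rangle$ via linearly independent $\beta_\kappa$ in the span of $1,\gam^h,\dots,\gam^{(g-1)h}$, use linearity of $Q$ together with the interpolation constraints \eq{interpolation2} to conclude $E(\gam^j\beta_\kappa)=0$ for $j=0,\dots,h-s$, and prove independence of the resulting $(g-t)(h-s+1)$ roots by noting that the subspaces $\left\langle \gam^j,\gam^{j+h},\dots,\gam^{j+h(g-1)}\right\rangle$ intersect trivially. The only nits are that the relevant linearity is over $\Fq$ rather than $\Fqm$ (both for $\fbu$ and for $Q$, since the row-space combinations have coefficients in $\Fq$) and a stray index in your expansion of $E(\gam^j\beta_\kappa)$, whose second argument should be $\fbu(\gam^j\beta_\kappa)$; neither affects the argument.
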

\begin{proof}
Notice that any element in the row space of $\bU$ can be represented as 
$$
\bigl( (\fbu(\beta),\fbu(\gam \beta),\dots,\fbu(\gam^{h-1} \beta) \bigr)
$$
for some $\beta \in \Fqm$. Now consider a basis for $\left\langle \bU \right\rangle \cap \left\langle \bY \right\rangle$. Indeed, the basis can be represented as 
$$
\bigl\{ \bigl( (\fbu(\beta_i),\fbu(\gam \beta_i),\dots,\fbu(\gam^{h-1} \beta_i) \bigr)\,:\, i=1,2,\dots,n-t\bigr\}
$$ 
where $\beta_1,\dots,\beta_{g-t}$ are $g-t$ linearly independent elements of $\Fqm$. In fact, they are taken from the subspace spanned by $1,\gam^h,\dots,\gam^{h(g-1)}$. Then linearity of the interpolation $Q$ and \eq{interpolation2} together imply that
$$
Q\bigl(\gam^j \beta_i, \fbu(\gam^j \beta_i), \fbu(\gam^{j+1}) \beta_i,\dots, \fbu(\gam^{j+s-1}\beta_i)\bigr) = 0
$$
for $i = 1,2,\dots,g-t$ and $j=0,1,\dots,h-s$. It is indeed equivalent to $\gam^j \beta_i$ being a root for $E(X)$. We claim that $\gam^j \beta_i$, for $i = 1,2,\dots,g-t$ and $j=0,1,\dots,h-s$ are all linearly independent elements of $\Fqm$. Let $\cP_j$ denote the subspace spanned by $\gam^j,\gam^{j+h},\dots,\gam^{j+h(g-1)}$. Since $1,\gam, \dots, \gam^{n-1}$ are all linearly independent,  $\cP_j$'s are all disjoint. Also, $\gam^j \beta_i$, for $i = 1,2,\dots,g-t$, are $g-t$ linearly independent elements of $\cP_j$. This completes the proof of claim. Therefore, $\gam^j \beta_i$, for $i = 1,2,\dots,g-t$ and $j=0,1,\dots,h-s$, are  $(g-t)(h-s+1)$ linearly independent roots for $E(X)$.   
\end{proof}

\begin{corollary}
\label{cor_roots2}
 If $d \leq (g-t)(h-s+1)$, then $E(X)$ is identically equal to zero.
\end{corollary}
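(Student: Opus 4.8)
The statement to prove is \Cref{cor_roots2}: if $d \leq (g-t)(h-s+1)$, then the linearized polynomial $E(X)$ is identically zero. The plan is to mirror exactly the argument used for \Cref{cor_roots} in Section~\ref{sec:three}, combining a degree bound on $E(X)$ with the root count just established in \Lref{lemma_roots2}. Recall the elementary but crucial fact about linearized polynomials: if a linearized polynomial of $q$-degree at most $D$ vanishes on $D$ linearly independent points, then, since its set of roots is an $\Fq$-subspace of dimension at most equal to its $q$-degree, it must vanish on a space of dimension $\ge D$, forcing it to be the zero polynomial.

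First I would bound the $q$-degree of $E(X)$. Writing $E(X) = Q_0(X) + \sum_{i=1}^{s} Q_i \otimes \fbu(\gam^{i-1} X)$, note that $Q_0$ has $q$-degree at most $d-1$ by construction, while each $Q_i$ (for $i \ge 1$) has $q$-degree at most $d-k$, and $\fbu$ has $q$-degree at most $k-1$. Since composition $\otimes$ of linearized polynomials adds $q$-degrees, each term $Q_i \otimes \fbu(\gam^{i-1}X)$ has $q$-degree at most $(d-k)+(k-1) = d-1$. Hence $E(X)$ has $q$-degree at most $d-1$.

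Next I would invoke \Lref{lemma_roots2}, which guarantees that $E(X)$ has at least $(g-t)(h-s+1)$ linearly independent roots in $\Fqm$. Under the hypothesis $d \leq (g-t)(h-s+1)$, this root count is at least $d$, which strictly exceeds the $q$-degree bound $d-1$ on $E(X)$. By the fact recalled above, a linearized polynomial of $q$-degree at most $d-1$ cannot have $d$ linearly independent roots unless it is identically zero. Therefore $E(X) \equiv 0$, as claimed.

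I do not expect any real obstacle here: the lemma is a direct corollary assembled from the two preceding results, and both the degree computation and the root-counting principle are routine for linearized polynomials. The only point requiring a little care is the bookkeeping on the upper index of the sum defining $E(X)$ — the displayed formula writes $\sum_{i=1}^{s}$, so the degree argument should be stated for $i=1,\dots,s$ (not $i=1,\dots,L$ as in the analogous \Cref{cor_roots}, which appears to be a leftover from the earlier section's notation). Once that is noted, the proof is essentially two lines.
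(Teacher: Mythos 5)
Your proof is correct and follows exactly the same route as the paper: bound the $q$-degree of $E(X)$ by $d-1$ from the interpolation constraints, invoke \Lref{lemma_roots2} for the $(g-t)(h-s+1)\ge d$ linearly independent roots, and conclude $E\equiv 0$ since the root space of a nonzero linearized polynomial of $q$-degree $d-1$ has dimension at most $d-1$. Your side remark about the summation index ($s$ versus the leftover $L$) is also a correct observation.
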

\begin{proof}
The proof is very similar to the proof of \Cref{cor_roots}. The $q$-degree of $E(X)$ is at most $d-1$ by the same argument. $E(X)$ has at least $(g-t)(h-s+1)$ linearly independent roots by \Lref{lemma_roots2}. Thus, $E(X)$ must be the all zero polynomial.
\end{proof}

\begin{theorem}
\label{thm_correctness2}
If the number of errors, $t$, is bounded as
\be{t_condition}
t < \frac{gs}{s+1} \bigl( 1 - \frac{h}{h-s+1} R\bigr)
\ee
Then the proposed list-decoding algorithm of folded Gabidulin codes is correct i.e. it outputs a list of size at most $q^{m(s-1)}$ which includes the transmitted message $\bu$.
\end{theorem}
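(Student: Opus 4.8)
\noindent\emph{Proof proposal.}\ The plan is to transcribe the argument behind \Tref{thm_correctness} into the folded-Gabidulin setting, using the three lemmas already proved for this case. First I would invoke \Lref{lemma_interpolation2} to fix a nonzero interpolation polynomial $Q$ satisfying \eq{interpolation2}; this is exactly where the choice of $d$ in \eq{define_d2} is consumed. Next, form the linearized polynomial $E(X)=Q\bigl(X,\fbu(X),\fbu(\gam X),\dots,\fbu(\gam^{s-1}X)\bigr)$ and apply \Lref{lemma_roots2}, which produces at least $(g-t)(h-s+1)$ linearly independent roots of $E$ in $\Fqm$. By the same $q$-degree bookkeeping as in \Cref{cor_roots2} (each $Q_i\otimes\fbu(\gam^{i-1}X)$ has $q$-degree at most $(d-k)+(k-1)=d-1$, as does $Q_0$), the $q$-degree of $E$ is at most $d-1$; hence, the moment we know $d\le (g-t)(h-s+1)$, \Cref{cor_roots2} forces $E\equiv 0$.

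The substantive part of the proof is therefore the purely arithmetic claim that the hypothesis \eq{t_condition} is precisely the inequality $d\le (g-t)(h-s+1)$. Since the right-hand side is an integer and $d=\bigl\lceil\tfrac{g(h-s+1)+s(k-1)+1}{s+1}\bigr\rceil$, and since $\lceil x\rceil\le N\iff x\le N$ for integer $N$, this holds iff $g(h-s+1)+s(k-1)+1\le (s+1)(g-t)(h-s+1)$, i.e.\ iff $s(k-1)+1\le (h-s+1)\bigl(gs-(s+1)t\bigr)$. On the other side, substituting $R=k/n=k/(gh)$ into \eq{t_condition} and multiplying through by the positive quantity $(s+1)(h-s+1)$ shows that \eq{t_condition} is equivalent to $sk<(h-s+1)\bigl(gs-(s+1)t\bigr)$. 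Because $s\ge 1$ gives $s(k-1)+1=sk-(s-1)\le sk$, the bound \eq{t_condition} indeed yields $s(k-1)+1\le sk<(h-s+1)\bigl(gs-(s+1)t\bigr)$, hence $d\le (g-t)(h-s+1)$, hence $E\equiv 0$ by \Cref{cor_roots2}. This chain of inequalities — keeping the direction straight across the ceiling and the substitution of $R$, and noting that the slack $s-1\ge 0$ is exactly what bridges the strict hypothesis and the non-strict conclusion — is the one step that needs genuine care; everything else is mechanical.

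Finally, $E\equiv 0$ says that the transmitted message polynomial $\fbu(X)$ satisfies \eq{factorization_2}, so $\bu$ is on the output list. For the list-size bound, note that \eq{factorization_2} has exactly the shape of \eq{factorization}, so the proof of \Lref{list_bound} applies verbatim: the set of solutions $f(X)\in\L_{q^m}[X]$ of $q$-degree at most $k-1$ forms an affine $\Fq$-subspace of dimension at most $m(s-1)$, i.e.\ there are at most $q^{m(s-1)}$ of them. Combining the two parts gives a list of size at most $q^{m(s-1)}$ containing $\bu$, which is the assertion of the theorem.
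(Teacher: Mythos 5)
Your proof is correct and follows essentially the same route as the paper's: existence of $Q$ via \Lref{lemma_interpolation2}, vanishing of $E(X)$ via \Lref{lemma_roots2} and \Cref{cor_roots2}, and the list-size bound via \Lref{list_bound}. The only difference is that you carry out the arithmetic linking \eq{t_condition} to $d\le (g-t)(h-s+1)$ exactly, with $R=k/(gh)$, whereas the paper invokes the approximation $R\approx (k-1)/n$ and asserts equivalence; your version is the more careful one, and the implication you establish is in precisely the direction the theorem requires.
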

\begin{proof}
The interpolation polynomial $Q$ that satisfies \eq{interpolation2} is guaranteed to exist by \Lref{lemma_interpolation2}. If 
\be{thm_correctness_2}
\left\lceil \frac{g(h-s+1)+s(k-1)+1}{s+1}\right\rceil \leq (g-t)(h-s+1)
\ee
then by \Cref{cor_roots2} and using the expression for $d$ from \eq{define_d2}, $E(X)$ is the all zero polynomial. \eq{thm_correctness_2} is equivalent to
$$
g(h-s+1) + s(k-1) < (g-t)(h-s+1)(s+1)
$$
which can be simplified to \eq{t_condition} by using the approximation 
$$
R \approx \frac{k-1}{n}
$$
Therefore, the message polynomial $\fbu(X)$ is a solution to \eq{factorization_2}. There are at most $q^{m(s-1)}$ solutions to \eq{factorization_2} by \Lref{list_bound}. Therefore, the list size is at most $q^{m(s-1)}$.
\end{proof}

\begin{corollary}
\label{cor_radius}
The normalized decoding radius of folded Gabidulin code using the proposed list-decoding algorithm is equal to
$$
 \frac{s}{s+1} \bigl( 1 - \frac{h}{h-s+1} R\bigr)
$$
\end{corollary}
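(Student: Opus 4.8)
The plan is to read off the normalized decoding radius directly from \Tref{thm_correctness2}. Recall that a codeword of the $h$-folded Gabidulin code is a matrix in $\Fq^{g\times hm}$ with $g=n/h$ rows, and — following the standard convention for rank-metric codes, under which the fraction of errors is the rank of the error matrix normalized by the number of rows — the relevant normalized error quantity is $t/g$. So what has to be shown is that $t/g$ is allowed to be as large as $\frac{s}{s+1}\bigl(1-\frac{h}{h-s+1}R\bigr)$, and no larger, for the guarantee of \Tref{thm_correctness2} to apply.

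First I would take the bound \eq{t_condition} of \Tref{thm_correctness2}, namely $t < \frac{gs}{s+1}\bigl(1-\frac{h}{h-s+1}R\bigr)$, and simply divide both sides by $g$. This gives $\frac{t}{g} < \frac{s}{s+1}\bigl(1-\frac{h}{h-s+1}R\bigr)$, so every error pattern whose normalized rank is strictly below $\frac{s}{s+1}\bigl(1-\frac{h}{h-s+1}R\bigr)$ is successfully handled (the output list, of size at most $q^{m(s-1)}$, contains the transmitted message). The supremum of correctable normalized ranks is therefore exactly this value, which is by definition the normalized decoding radius.

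The only points needing a word of justification, rather than genuine work, are: (i) that the normalization is by $g$ and not by $n=gh$ — this is forced by the fact that the folded codewords live in $\Fq^{g\times hm}$, where the rank metric takes values in $\{0,1,\dots,\min(g,hm)\}=\{0,\dots,g\}$ (recall $g\le n\le hm$); and (ii) that the strict inequality in \eq{t_condition} is the correct notion — as usual the "radius" is the supremum over correctable weights, and the lost endpoint is the familiar integrality/$\lceil\cdot\rceil$ artifact already absorbed into the approximation $R\approx(k-1)/n$ used in the proof of \Tref{thm_correctness2}. I do not anticipate any real obstacle: the entire content of the statement is \Tref{thm_correctness2}, and this corollary is just its restatement as a rate-versus-radius trade-off. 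One could additionally remark, in analogy with the folded Reed–Solomon case, that taking $s$ proportional to $\epsilon h$ and letting $h\to\infty$ drives $\frac{s}{s+1}\bigl(1-\frac{h}{h-s+1}R\bigr)$ to $1-R-\epsilon$, thereby meeting the Singleton bound; but that optimization over the free parameters $s,h$ is separate from establishing the displayed formula.
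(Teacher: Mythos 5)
Your proposal is correct and matches the paper's (implicit) argument: the paper states \Cref{cor_radius} without a separate proof precisely because it is just \eq{t_condition} of \Tref{thm_correctness2} divided through by $g$, the number of rows of the folded codeword matrix in $\Fq^{g\times hm}$. Your two side remarks — that the normalization is by $g$ rather than $n$, and that the endpoint is lost to the ceiling and the approximation $R\approx (k-1)/n$ — are consistent with the paper's conventions and require no further work.
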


If we let both $s$ and $h$ grow large while $s$ is much smaller that $h$, we get decoding radius arbitrary close to $1-R$. Notice that $1-R$ is indeed equal to the normalized minimum rank distance of the code. This means that we are able to achieve the ultimate error-correction radius for rank-metric codes.This result is stated in the following theorem.

\begin{theorem}
\label{main_theorem}
For every $\eps > 0$ and $0 < R < 1$, there is a family of folded Gabidulin codes with rate $R$ that can be list-decoded up to normalized number of errors $1 - R - \eps$. The size of output list is at most $\cQ ^ {O(1/\eps)}$, where $\cQ$ is the size of the field that message symbols are chosen from. 
\end{theorem}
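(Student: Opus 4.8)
The plan is to instantiate the folded Gabidulin code family of Section~\ref{sec:four} with the interpolation parameter $s$ and the folding parameter $h$ both taken large, in the regime $s \ll h$, and to read off the guarantee directly from \Cref{cor_radius} and \Tref{thm_correctness2}. Given $\eps > 0$ (one may assume $\eps < 1$, since otherwise $1 - R - \eps < 0$ and the claim is vacuous), I would first set $s = \lceil 2/\eps \rceil$, so that $\frac{s}{s+1} = 1 - \frac{1}{s+1} \geq 1 - \frac{1}{s} \geq 1 - \frac{\eps}{2}$. Then I would choose $h$ to be any multiple of $s$ with $h \geq 4s/\eps$; using $\frac{h}{h-s+1} \leq \frac{1}{1 - s/h}$ together with the elementary bound $\frac{1}{1-x} \leq 1 + 2x$ for $0 \leq x \leq \frac12$, this gives $\frac{h}{h-s+1} \leq 1 + \frac{2s}{h} \leq 1 + \frac{\eps}{2}$. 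Substituting into \Cref{cor_radius}, the normalized decoding radius of the resulting code is at least
$$
\frac{s}{s+1}\Bigl(1 - \frac{h}{h-s+1}\,R\Bigr) \;\geq\; \Bigl(1 - \frac{\eps}{2}\Bigr)\Bigl(1 - \bigl(1 + \tfrac{\eps}{2}\bigr)R\Bigr) \;\geq\; 1 - R - \eps ,
$$
the last inequality being the expansion $(1 - \tfrac{\eps}{2}) - (1 - \tfrac{\eps^2}{4})R \geq 1 - R - \eps$, which reduces to $\tfrac{\eps}{2} + \tfrac{\eps^2}{4}R \geq 0$.

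Next I would fix the remaining parameters so that an actual code of rate arbitrarily close to $R$ exists: take any $m \geq n$ with $n = gh$ for a positive integer $g$, a primitive element $\gam \in \Fqm$, and $k = \lceil Rn \rceil$. This yields a Gabidulin code $\cC$ as in Section~\ref{sec:two_B} whose $h$-folded version has rate $k/n$, which tends to $R$ as $g \to \infty$; recall that \Cref{cor_radius} and \Tref{thm_correctness2} are already phrased with the approximation $R \approx (k-1)/n$, which becomes exact in this limit. Correctness then follows immediately from \Tref{thm_correctness2}: the bound there on the absolute number of errors, $t < \frac{gs}{s+1}\bigl(1 - \frac{h}{h-s+1}R\bigr)$, is exactly $g$ times the normalized radius computed above, so every error of normalized rank at most $1 - R - \eps$ is corrected, and the decoder outputs a list of size at most $q^{m(s-1)}$. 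Since the message symbols lie in $\Fqm$, we have $\cQ = q^m$, so $q^{m(s-1)} = \cQ^{s-1} = \cQ^{O(1/\eps)}$, because $s = \lceil 2/\eps \rceil = O(1/\eps)$.

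I do not anticipate a genuine obstacle: the theorem is essentially a limiting form of \Cref{cor_radius}, and the only real content is the two-variable estimate showing $\frac{s}{s+1}\cdot\frac{h}{h-s+1} \to 1$ at a rate that lets $\eps$ absorb both error terms, together with the bookkeeping needed to meet $s \leq h$, $h \mid n$, $n \leq m$ and rate $\to R$ simultaneously. The one point requiring a little care is that \Cref{cor_radius} has already swallowed the $O(1/n)$ gap between $R$ and $(k-1)/n$; to be fully rigorous one either carries $(k-1)/n$ through the displayed inequalities and takes $n$ large enough that this gap falls below, say, $\eps/4$ (strengthening the choices of $s$ and $h$ correspondingly), or simply states the conclusion for the family of codes whose rates converge to $R$. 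Either reading delivers the statement as written.
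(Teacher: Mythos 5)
Your proposal is correct and follows essentially the same route as the paper, which simply instantiates \Cref{cor_radius} and \Tref{thm_correctness2} with $s\sim 1/\eps$ and $h\sim 1/\eps^2$ (the paper's proof is a one-line parameter choice). Your version is in fact more careful: you verify explicitly that $\frac{s}{s+1}\bigl(1-\frac{h}{h-s+1}R\bigr)\ge 1-R-\eps$ with workable constants, handle the divisibility and $n\le m$ constraints, and flag the $R\approx (k-1)/n$ approximation that the paper silently absorbs.
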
  

\begin{proof}
Given $R$ and $\eps$, we can apply the results of \Tref{thm_correctness2} and \Cref{cor_radius} with the choice $s = 1 /2\eps$ and $h = 1 / 4\eps^2$.
\end{proof}

\vspace{1ex}

%-------------------------- References-------------------------------------

\bibliographystyle{IEEEtran}

\vspace*{1ex}

\end{document}